\documentclass[a4paper,USenglish,cleveref,autoref,thm-restate,pdfa]{lipics-v2021}
\hideLIPIcs\def\subjclassHeading{}\ccsdesc{}\global\renewcommand\ccsdesc[2][100]{}
\nolinenumbers %uncomment to disable line numbering 
\allowdisplaybreaks[1]

\title{A Note on the Complexity of Directed Clique}
\author{Grzegorz Gutowski}
{Institute of Theoretical Computer Science, Faculty of Mathematics and Computer Science, Jagiellonian University, Krak{\'o}w, Poland}
{grzegorz.gutowski@uj.edu.pl}
{https://orcid.org/0000-0003-3313-1237}
{Partially supported by grant no.~2023/49/B/ST6/01738 from National Science Centre, Poland.}

\author{Mikołaj Rams}
{Institute of Theoretical Computer Science, Faculty of Mathematics and Computer Science, Jagiellonian University, Krak{\'o}w, Poland}
{mikolaj.rams@student.uj.edu.pl}
{https://orcid.org/0009-0004-9142-9167}
{}

\authorrunning{G. Gutowski, M. Rams}
\Copyright{Grzegorz Gutowski, Mikołaj Rams}

\keywords{Directed Clique, Computational Complexity, Polynomial Hierarchy}
\ccsdesc[500]{Theory of computation}

\usepackage{amsmath,amsfonts,amssymb,amsthm}
\usepackage{graphicx}
\usepackage{tcolorbox}
\usepackage{tikz}
\usetikzlibrary{arrows}
\usetikzlibrary{backgrounds}
\usetikzlibrary{calc}
\usetikzlibrary{decorations.pathmorphing}
\usetikzlibrary{decorations.pathreplacing}
\usetikzlibrary{decorations.shapes}
\usetikzlibrary{fit}
\usetikzlibrary{patterns}
\usetikzlibrary{shapes.geometric}
\usepackage[textsize=small]{todonotes}
\usepackage{xcolor}
\usepackage{xspace}

\newcommand{\defproblem}[4]{
  \begin{tcolorbox}%
    \nolinenumbers
    \hspace{0ex}\hspace*{-2.8ex}
    \begin{minipage}{0.99\textwidth}
      \vspace{0ex}\vspace*{-1ex}
      \begin{tabular}{@{}l@{~~}p{0.9\textwidth}@{}}
        {\sf\bfseries\color{gray} Problem:} & #1\\[.1ex]
        {\sf\bfseries\color{gray} Input:} & #2\\[.1ex]
        {\sf\bfseries\color{gray} #4:} & #3\\[-1ex]
      \end{tabular}
    \end{minipage}
  \end{tcolorbox}
}
\newcommand{\defdecproblem}[3]{\defproblem{#1}{#2}{#3}{Question}}

\let\geq\geqslant
\let\setminus\smallsetminus

\let\rho\varrho

\newcommand{\POL}{\ensuremath{\mathsf{P}}\xspace}
\newcommand{\NP}{\ensuremath{\mathsf{NP}}\xspace}
\newcommand{\CONP}{\ensuremath{\mathsf{co}\text{-}\mathsf{NP}}\xspace}

\newcommand{\PHS}[1]{\ensuremath{\Sigma^\mathsf{P}_{#1}}\xspace}

\newcommand{\SPTWO}{\PHS{2}}

\newcommand{\Pclique}{\textsc{Clique}\xspace}
\newcommand{\Pdirclique}{\textsc{DirectedClique}\xspace}
\newcommand{\Ptourclique}{\textsc{TournamentClique}\xspace}
\newcommand{\Ptdirclique}[1][t]{\textsc{$#1$-DirectedClique}\xspace}
\newcommand{\Pttourclique}[1][t]{\textsc{$#1$-TournamentClique}\xspace}
\newcommand{\PSPTWO}{\textsc{Existential-}2\textsc{-Level-}3{-CNF}\xspace}
\newcommand{\PSPTWOSAT}{\textsc{Existential-}2\textsc{-Level-SAT}\xspace}

\newcommand{\abrac}[1]{\left\langle#1\right\rangle}
\newcommand{\set}[1]{\left\{#1\right\}}
\newcommand{\norm}[1]{\left|#1\right|}

\newcommand{\ora}[1]{\overrightarrow{#1}}
\DeclareMathOperator{\diomega}{\ora \omega}

\DeclareMathOperator{\backedge}{BE}

\definecolor{dark blue}{rgb}{0.121,0.47,0.705}
\let\emph\relax\DeclareTextFontCommand{\emph}{\color{dark blue}\em}

\begin{document}

\maketitle

\begin{abstract}
    For a directed graph~$G$, and a linear order $\ll$ on the vertices of $G$,
    we define \emph{backedge graph} $G^\ll$ to be the undirected graph on the same vertex set with edge $\set{u,w}$ in $G^\ll$ if and only if $(u,w)$ is an arc in $G$ and $w \ll u$.
    The \emph{directed clique number} of a directed graph~$G$ is defined as the minimum size of the maximum clique in the backedge graph $G^\ll$ taken over all linear orders $\ll$ on the vertices of~$G$.
    A natural computational problem is to decide for a given directed graph~$G$ and a positive integer~$t$, if the directed clique number of~$G$ is at most~$t$.
    This problem has polynomial algorithm for $t=1$ and is known to be \NP-complete for every fixed $t\ge3$, even for tournaments.
    In this note we prove that this problem is \SPTWO-complete when $t$ is given on the input. 
\end{abstract}

\section{Introduction}
All graphs discussed in this paper are finite, simple, and loopless.
For directed graphs we allow for parallel arcs in opposite directions.
We do not allow for parallel arcs in the same direction.
The vertex set and the edge set of an undirected graph~$G$ are denoted by~$V(G)$ and~$E(G)$.
The vertex set and the arc set of a directed graph~$G$ are denoted by~$V(G)$ and~$A(G)$.
For a subset~$U\subseteq V(G)$, $G[U]$ denotes the subgraph of~$G$ \emph{induced} by~$U$. 
A directed graph $T$ is a \emph{tournament} when for each pair $u,w \in V(G)$ of vertices, exactly one of the $(u,w)$ or $(w,u)$ is an arc in $A(G)$.
For a directed graph~$G$, and a linear order $\ll$ on the vertex set $V(G)$ we define the set of \emph{backedges} to be:
\[
    \backedge(G,\ll) = \set{ \set{u,w} : (u,w) \in A(G) \land w \ll u }\text{.}
\]
We define the \emph{backedge graph} $G^\ll$ to be the undirected graph with vertex set $V(G)$ and edge set $\backedge(G,\ll)$.
We say that a directed graph $G$ is \emph{transitive} if it does not contain a directed cycle.
The \emph{transitive tournament} $TT_n$ with $n$ vertices is the only tournament with $n$ vertices that is transitive.

The \emph{complete directed graph} $Q_n$ with $n$ vertices is the directed graph with $n$ vertices and all possible $n(n-1)$ arcs.
An undirected graph with $t$ vertices and all possible $t \choose 2$ edges is a \emph{clique} $K_t$.
We say that an undirected graph $G$ \emph{contains} a clique $K_t$, if there is a subset $U\subseteq V(G)$ of $t$ vertices such that $G[U]$ is a clique~$K_t$. 
For a $G$ that does not contain a clique $K_t$, we say that $G$ is \emph{$K_t$-free}.
The \emph{clique number} of an undirected graph~$G$, denoted $\omega(G)$, is the maximum $t$ such that $G$ contains a clique $K_t$.
The natural computational problem is
\defdecproblem{\Pclique}{An undirected graph $G$ and a positive integer $t$}{$\mathsf{Yes}$ if and only if $\omega(G) \ge t$}
This problem has broad practical applications in different areas of discrete optimization and it is also of great theoretical interest. 
It is one of the first problems known to be \NP-complete, see~\cite{GareyJ79}, and is used as a base for countless reductions.

Following the work of Nguyen, Seymour and Scott~\cite{NguyenSS25}, Aboulker, Aubian, Charbit and Lopes~\cite{AboulkerACL23} give the following definition of the clique number for directed graphs.
The \emph{directed clique number} of a directed graph~$G$, denoted $\diomega(G)$, is defined as the minimum number $t$ such that there exists linear order $\ll$ on $V(G)$ with the clique number of the backedge graph $G^\ll$ equal to $t$, i.e.\ 
\[
    \diomega(G) = \min_{\ll} \omega(G^\ll)\text{,}
\]
and the natural computational problem is
\defdecproblem{\Pdirclique}{A directed graph~$G$ and a positive integer~$t$}{$\mathsf{Yes}$ if and only if $\diomega(G) \le t$}
Nguyen, Seymour and Scott~\cite[Section 6]{NguyenSS25} were focused on tournaments and asked the following question.
\textit{,,Can we test whether a tournament has small directed clique number (even in an approximate sense) in polynomial time?
Is it in \CONP?''}
Let us define a variant of the directed clique problem in which the input graph is required to be a tournament.
\defdecproblem{\Ptourclique}{A tournament~$T$ and a positive integer~$t$}{$\mathsf{Yes}$ if and only if $\diomega(T) \le t$}
We also define parametrized variants \Ptdirclique and \Pttourclique in which $t$ is some fixed value instead of being given on the input.

It is an easy observation that $\diomega(G) =0$ iff $G$ is an empty graph, and that $\diomega(G) = 1$ iff $G$ is transitive.
Transitivity can be easily checked using depth first search algorithm, and we get that \Ptdirclique[1] and \Pttourclique[1] are in \POL.
For every $t\ge3$, Aubian~\cite{Aubian24} showed that both \Pttourclique and \Ptdirclique are \NP-complete and conjectured that the same holds for $t=2$.
To our best knowledge, this question has not been resolved yet.

Other graph parameters can also be defined for directed graphs using the backedge graph similarly as for the definition of the directed clique number.
See the discussion of other intersting parameters in a paper on degree-width of directed graphs by Aboulker, Oijid, Petit, Rocton and Simon~\cite{AboulkerOPRS24}.
The classical notion of \emph{dichromatic number}, defined first by Neumann-Lara~\cite{Neumann82} can equivalently be defined for a directed graph $G$ as the minimum chromatic number of $G^\ll$ taken over all linear orders $\ll$ on $V(G)$.
Therefore, out of these similarly defined parameters, the dichromatic number attracts the most attention.
Not surprisingly, the dichromatic number and the directed clique number are related, see~\cite{NguyenSS25,AboulkerACL23,Aubian24,AboulkerOPRS24}.

Nguyen, Seymour and Scott~\cite{NguyenSS25} asked if \Ptourclique is in \POL or \CONP.
We are unable to answer this questions, but we provide some strong evidence that the answer should be negative.
We do that by settling the complexity of the more general problem on directed graphs.
Observe that the problem \Pdirclique on instance $\abrac{G,t}$ is naturally expressed as:
\[
  \exists_{\ll\text{ - a linear order on $V(G)$}}:\quad\forall_{A\subseteq V(G),\norm{A}= t+1}:\quad \text{$A$ does not induce a $(t+1)$-clique in $G^\ll$}\text{,}
\]
and from this formulation we easily get that \Pdirclique is in the second level of polynomial hierarchy class \SPTWO.
Consult textbook by Arora and Barak~\cite[Chapter~5]{AroraB09} for an introduction of the polynomial hierarchy.
Schaefer and Umans~\cite{SchaeferU02_1,SchaeferU02_2,SchaeferU08} give an extensive list of complete problems for different classes in the polynomial hierarchy.
For a very brief introduction, \SPTWO is defined as $\NP^\NP$ -- a class of languages decidable in polynomial time by nondeterministic Turing machines witch access to an \NP-oracle,
where an \NP-oracle allows to test any language in \NP in a single step of execution.
The canonical complete problem for \SPTWO is the following.
\defdecproblem{\PSPTWOSAT}{Boolean formula $\varphi(x_1,\ldots,x_a,y_1,\ldots,y_b)$ with variables in two disjoint sets $\set{x_1,\ldots,x_a}$ and $\set{y_1,\ldots,y_b}$}{$\mathsf{Yes}$ if and only if the following Boolean formula is true.\\
    &$\exists_{x_1,x_2,\ldots,x_a}:\quad \forall_{y_1,y_2,\ldots,y_b}:\quad \varphi(x_1,\ldots,x_a,y_1,\ldots,y_b)$
    }
It was independently proved by Stockmeyer~\cite{Stockmeyer76} and Wrathall~\cite{Wrathall76} that the class \SPTWO is exactly the class of languages reducible to \PSPTWOSAT via polynomial-time many-one reductions.
This fact makes \SPTWO a natural complexity class for \Pdirclique and \Ptourclique problems.
We make the following conjecture that would settle the complexity of both problems.
\begin{conjecture}\label{con:main}
    \Ptourclique is \SPTWO-complete.
\end{conjecture}
An obstacle in proving \cref{con:main} is the lack of constructions of tournaments with high directed clique number.
Aboulker, Aubian, Charbit and Lopes~\cite{AboulkerACL23} give a simple construction of tournaments with $\diomega(T) \ge \Omega(\log \norm{V(T)})$.
It is also true that for a tournament in which direction of every arc is picked uniformly at random, we have $\diomega(T) = \Theta(\log \norm{V(T)})$ asymptotically almost surely.
We are unaware of any construction of tournaments with higher directed clique number.
If \cref{con:main} is true, then we should expect that there are tournaments with directed clique number polynomial in the number of vertices.
To complement the logarithmic lower bounds, we provide a polynomial upper bound on the directed clique number.
In \cref{sec:upper_bound} we give a proof that for any tournament $T$ we have $\diomega(T) \le \sqrt{2\norm{V(T)}}$.
Let us note that for a random order $\ll$ on the vertices of $TT_n$ we expect the clique number of $G^\ll$ to be $\Theta(\sqrt{V(T)})$.
Nevertheless, $\diomega(TT_n)=1$.
We conjecture that the lower bound for the directed clique number of tournaments can be improved significantly.
\begin{conjecture}\label{con:lower_bound}
    There is an explicit construction of tournaments $T_1,T_2,\ldots$ such that $\norm{V(T_n)} = \Theta(n)$ and $\diomega(T_n) = \Omega(n^{1/100})$.
\end{conjecture}

As we are unable to resolve \cref{con:main}, we prove the following weaker result that gives some evidence that \cref{con:main} should be true.
We turn our attention back to \Pdirclique problem, as it is much easier to construct directed graphs with high directed clique number.
In particular, for the complete directed graph $Q_n$ we have $\diomega(Q_n)=n$.
The main result of this paper is the following.
\begin{restatable}{theorem}{thmmain}\label{thm:main}
\Pdirclique is \SPTWO-complete.
\end{restatable}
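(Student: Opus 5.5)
Membership in \SPTWO follows from the formulation above: guess the linear order $\ll$, then a co-\NP check confirms that no $(t+1)$-subset induces a clique in $G^\ll$. For hardness, I would reduce from \PSPTWO, the variant of \PSPTWOSAT in which $\varphi$ is a 3-CNF; by Schaefer--Umans this variant is \SPTWO-complete. Write $\varphi=C_1\wedge\dots\wedge C_m$ over $x_1,\dots,x_a$ and $y_1,\dots,y_b$. I want to build $G$ and $t$ so that orders $\ll$ correspond to assignments of the $x$-variables. A $(t+1)$-clique in $G^\ll$ should then exist exactly when some $y$-assignment falsifies some clause under that $x$-assignment. Equivalently, it should witness a falsifying choice of $y$ for $\varphi$.

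The intended construction is as follows. For each $x_i$, take a digon $x_i^{T}\leftrightarrow x_i^{F}$; the order $\ll$ decides which endpoint comes first. The backedge graph always contains the edge $\{x_i^T,x_i^F\}$, but the direction in which arcs to other gadgets become backedges depends on this choice. For each $y_j$, take two vertices $y_j^T,y_j^F$ with \emph{no} arc between them, so a clique can use at most one of them; this encodes the universal choice of $y_j$. For each clause $C_k$ and each literal in it, create a literal vertex. I will assemble these into "levels" so that a $(t+1)$-clique must take exactly one vertex from each of $a+b+m$ levels, or something similar, with $t+1$ equal to the number of levels.

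Between vertices of different levels I place arcs so that they always form backedges regardless of the order. Using a digon between $u$ and $w$ forces $\{u,w\}\in\backedge(G,\ll)$ for every $\ll$. So I use digons between all pairs in different levels except where consistency should be enforced. First, $y_j^T$ is joined only to literal vertices consistent with $y_j=T$, namely those not containing the literal $\neg y_j$. Second, each clause level should force the chosen literal to be \emph{false*}. To this end I join the literal vertex for $y_j$ to $y_j^F$ but not to $y_j^T$, and symmetrically for $\neg y_j$. Third, for $x$-literals the edge to $x_i^T/x_i^F$ is a single arc, so whether it is a backedge depends on $\ll$. Consequently, a clique exists iff $y$ and a clause can be chosen with all its literals false.

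Finally, I take $t+1$ to be the number of levels and argue two directions. If $\exists x\,\forall y\,\varphi$ holds, pick $\ll$ from $x$ and show that every clique of that size would give a falsified clause. Conversely, an order without such a clique yields $x$ by reading each digon's orientation.

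The main obstacle is the $x$-gadget. Single arcs give backedges depending on the relative order of two vertices, not on a "switch". I must therefore prevent the order from cheating, for example by placing a literal vertex between $x_i^T$ and $x_i^F$ or by using inconsistent positions across clauses. My first attempt is to replace each $x$-variable by a large complete directed graph $Q_n$-like block, together with padding, which pins down the coarse structure of any good order. Alternatively, I could make literal-to-$x$ connections through pairs of vertices, so that only the relative order inside the $x_i$ digon matters. I would also have to check that cliques mixing vertices in unintended ways cannot reach size $t+1$. I expect that this accounting of unintended cliques, together with forcing the order to behave like an assignment, is where most of the work lies.
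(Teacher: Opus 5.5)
Your membership argument is fine. The hardness reduction, however, targets the wrong quantifier pattern, so even if carried out it would only prove \NP-hardness. You aim for ``$G^\ll$ has a $(t+1)$-clique iff some $y$ falsifies some clause'', i.e.\ $\diomega(G)\le t$ iff $\exists x\,\forall y\,\varphi(x,y)$ with $\varphi$ a 3-CNF. That problem is in \NP. Indeed, $\forall y$ distributes over the conjunction, and for a clause on distinct variables $\forall y\,C_k$ holds iff one of its $x$-literals is true. So the question is just satisfiability of the $x$-parts of the clauses. The \SPTWO-complete restriction is $\exists x\,\forall y$ of a 3-DNF, equivalently $\exists x\,\neg\exists y\,\varphi$ with $\varphi$ a 3-CNF; this is what \PSPTWO denotes in the paper. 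Hence the clique must certify a $y$ that \emph{satisfies} all of $\varphi$: it must pick a true literal in \emph{every} clause, pairwise consistently, as in the textbook reduction from 3-SAT to Clique. The paper sets $t=2c-1$ and gives each literal occurrence a 2-vertex group. It joins groups in different clauses by all eight arcs unless they are opposite occurrences of one variable, and gets a $K_{2c}$ iff $\exists y\,\varphi(\nu,y)$. Your ``one vertex from each clause level'' picture is of this shape. But it cannot express your stated conclusion that a clique singles out one clause with all its literals false. The proposal is inconsistent here, and also about whether the literal vertex for $y_j$ is joined to $y_j^T$ or to $y_j^F$.

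The second gap is the $x$-gadget, which you rightly call the main obstacle but do not build, and the mechanism you sketch fails. Whether a single arc $(u,w)$ is a backedge depends only on the relative position of $u$ and $w$. Nothing ties the position of a literal vertex to the orientation chosen inside the digon $x_i^T\leftrightarrow x_i^F$, so the order can decide each such arc independently. Also, that digon is an edge of every $G^\ll$, so a clique may contain both $x_i^T$ and $x_i^F$, breaking your level count.

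The missing idea is to encode a truth value by the presence of the edge \emph{inside} a two-vertex group such as $\set{x_i^{(k)},x_{i,T}^{(k)}}$. This group is joined to all other groups by digons, so its outside adjacencies are order-independent, and it acts as a $K_2$ exactly when the literal is true. The order is then controlled by blocks: a block $Q_{t+1-s}$ joined by digons to $s$ chosen vertices forbids those $s$ vertices from forming a clique in $G^\ll$, i.e.\ forbids specific relative orders. With this tool, $Q_{t-1}$ and $Q_{t-2}$ blocks force exactly one of the two red arcs of a binary gadget to be a backedge (\cref{clm:binary_gadget}). Copy gadgets with $Q_{t-3}$ blocks then make all occurrences of $x_i$ agree (\cref{clm:copy_gadget}). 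Completeness additionally requires checking that these blocks create no unintended $K_{2c}$. The paper does this using $c>6$ and the fact that internal vertices have no neighbours outside their gadget.
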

Our feeling is that the reduction used to prove \cref{thm:main} is robust enough, so that a positive answer to \cref{con:lower_bound} should also give a positive answer to \cref{con:main} using similar techniques.

\section{Main Result}

The proof goes by a reduction from the following problem, which is a variation on the quantified boolean formula satisfaction problem.
It is a natural \SPTWO-complete problem with an easy reduction from \PSPTWOSAT~\cite{Stockmeyer76}.
\defdecproblem{\PSPTWO}{$3$-CNF formula $\varphi(x_1,\ldots,x_a,y_1,\ldots,y_b)$ with variables in two disjoint sets $\set{x_1,\ldots,x_a}$ and $\set{y_1,\ldots,y_b}$}{$\mathsf{Yes}$ if and only if the following Boolean formula is true.\\
 & $\exists_{x_1,x_2,\ldots,x_a}:\quad \neg \exists_{y_1,y_2,\ldots,y_b}:\quad \varphi(x_1,\ldots,x_a,y_1,\ldots,y_b)$}

We are now ready to recall and prove the main theorem.
\thmmain*
\begin{proof}
    We present a reduction from \PSPTWO to \Pdirclique.
    Assume that we are given an instance $\abrac{X,Y,\mathcal{C}}$ of \PSPTWO, where $X=\set{x_1,\ldots,x_a}$, $Y=\set{y_1,\ldots,y_b}$ are two disjoint sets of variables, and $\mathcal{C}=\set{C_1,\ldots,C_c}$ is a set o clauses with each clause having exactly three occurrences of three distinct variables in $X \cup Y$.
    For technical reasons, we assume that $c>6$, as otherwise there are at most $18$ variables and we can simply check all the possibilities.
    We set $t=2c-1$, and construct a graph $G$ such that $G$ admits a linear order~$\ll$ of vertices such that $G^\ll$ does not contain $K_{2c}$ as a subgraph if and only if the following Boolean formula
    \[
        \exists_{x_1,x_2,\ldots,x_a}:\quad \neg \exists_{y_1,y_2,\ldots,y_b}:\quad C_1 \wedge C_2 \wedge \ldots\wedge C_c
    \]
    is true.
    In the next paragraphs we describe gadgets necessary to construct the graph $G$.

    \subparagraph*{Binary Gadget}
    For every occurrence of a variable $x_i \in X$ in a clause $C_k$ we introduce a subgraph $G_{x_i}^{(k)}$, see \cref{fig:binary_gadget}, consisting of the following:
    \begin{enumerate}
        \item Three disjoint complete directed graphs (gray nodes in \cref{fig:binary_gadget}): $A'$ with $t-2$ vertices, and $A_F, A_T$ with $t-1$ vertices each.
        \item Vertices $x_i^{(k)}$, $x_{i,T}^{(k)}$, $x_{i,F}^{(k)}$ and $w_i^{(k)}$.
        \item Arcs as in \cref{fig:binary_gadget}. An arc between a vertex and a gray node represents all possible arcs between the vertex and every vertex in the complete directed graph represented by the gray node.
    \end{enumerate}
\begin{figure}
\begin{center}
\begin{tiny}
\begin{tikzpicture}
\node[circle,draw] at (0,0) (x) {$x_i^{(k)}$};
    \node[circle,draw] at (-2,0) (xF) {$x_{i,F}^{(k)}$};
    \node[circle,draw] at (2,0) (xT) {$x_{i,T}^{(k)}$};
\node[circle,draw,fill=lightgray] at (-2,-4) (xFA) {$A_F=Q_{t-1}$};
\node[circle,draw,fill=lightgray] at (2,-4) (xTA) {$A_T=Q_{t-1}$};
\begin{scope}[on background layer]
    \node (back) [fit=(xF) (xT) (x),fill=blue!25, inner sep=0.5em, rounded corners] {};
    \node (title) at (back.north) [anchor=south,color=blue] {\small $G_{x_i}^{(k)}$};
    \node [fit=(title) (xF) (xT) (x) (xFA) (xTA),fill=blue!5, inner sep=1em, rounded corners] {};
    \node (back) [fit=(xF) (xT) (x),fill=blue!25, inner sep=0.5em, rounded corners] {};
    \node (title) at (back.north) [anchor=south,color=blue] {\small $G_{x_i}^{(k)}$};
\end{scope}
\path[draw,->,red,very thick] (x) -- (xF);
\path[draw,->,red,very thick] (xT) -- (x);
    \node[circle,draw] at (0,-4) (xA) {$w_{i}^{(k)}$};
\path[draw,->] (xF) -- (xA);
\path[draw,->] (xA) -- (xT);
\path[draw,->] (xF) edge [bend left=20] (xT);
\path[draw,->] (xT) edge [bend left=20] (xF);
\node[circle,draw,fill=lightgray] at (-2,-4) (xFA) {$A_F=Q_{t-1}$};
\path[draw,->] (xFA) edge [bend left=10] (xA);
\path[draw,->] (xA) edge [bend left=10] (xFA);
\path[draw,->] (xFA) edge [bend left=10] (xF);
\path[draw,->] (xF) edge [bend left=10] (xFA);
\node[circle,draw,fill=lightgray] at (2,-4) (xTA) {$A_T=Q_{t-1}$};
\path[draw,->] (xTA) edge [bend left=10] (xA);
\path[draw,->] (xA) edge [bend left=10] (xTA);
\path[draw,->] (xTA) edge [bend left=10] (xT);
\path[draw,->] (xT) edge [bend left=10] (xTA);
\node[circle,draw,fill=lightgray] at (0,-2) (xM) {$A'=Q_{t-2}$};
\path[draw,->] (xM) edge [bend left=10] (xT);
\path[draw,->] (xT) edge [bend left=10] (xM);
\path[draw,->] (xM) edge [bend left=10] (xF);
\path[draw,->] (xF) edge [bend left=10] (xM);
\path[draw,->] (xM) edge [bend left=10] (x);
\path[draw,->] (x) edge [bend left=10] (xM);
\end{tikzpicture}
\end{tiny}
\end{center}
    \caption{\label{fig:binary_gadget} Binary gadget for occurrence of variable $x_i$ in clause $C_k$.
}
\end{figure}
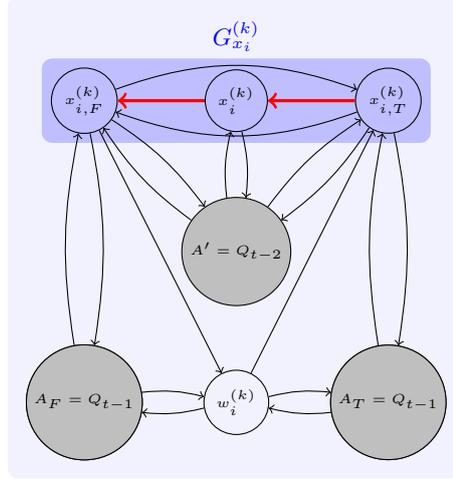
    Vertices $x_i^{(k)}$, $x_{i,T}^{(k)}$, and $x_{i,F}^{(k)}$ (in dark blue box in \cref{fig:binary_gadget}) are later connected to other vertices in the graph.
    All the other vertices in the gadget, i.e.\ vertices in $A_F\cup A_T \cup A' \cup \set{w_i^{(k)}}$ are \emph{internal} for the gadget and do not get any more adjacencies in the graph.
    The following claim captures the crucial property of the gadget.
    \begin{claim}\label{clm:binary_gadget}
      For any linear order $\ll$ on the vertices of $G_{x_i}^{(k)}$ such that the backedge graph $G_{x_i}^{(k)\ll}$ is $K_{2c}$-free we have
    that $\backedge(G_{x_i}^{(k)},\ll)$ contains exactly one of the edges $\{x_i^{(k)},x_{i,F}^{(k)}\}$ or $\{x_i^{(k)},x_{i,T}^{(k)}\}$ (arising from red arcs in \cref{fig:binary_gadget}).
\end{claim}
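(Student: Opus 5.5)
The plan is a short two-part check using only the arcs drawn in \cref{fig:binary_gadget}. The basic observation is this: if two vertices $u,v$ are joined by arcs in both directions, then $\{u,v\}$ is a backedge for \emph{every} linear order $\ll$. Consequently, each complete directed graph $Q_m$ in the gadget always induces a clique $K_m$ in the backedge graph. Likewise, a vertex joined in both directions to all vertices of such a $Q_m$ is always adjacent to all of it. The only pairs whose status depends on $\ll$ are the four one-way arcs $x_i^{(k)}\to x_{i,F}^{(k)}$, $x_{i,T}^{(k)}\to x_i^{(k)}$, $x_{i,F}^{(k)}\to w_i^{(k)}$ and $w_i^{(k)}\to x_{i,T}^{(k)}$. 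Recall that $t+1=2c$, so we must show that violating the claim always produces a $K_{t+1}$.

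\emph{At most one.} Suppose both red edges $\{x_i^{(k)},x_{i,F}^{(k)}\}$ and $\{x_i^{(k)},x_{i,T}^{(k)}\}$ are backedges. The pair $\{x_{i,F}^{(k)},x_{i,T}^{(k)}\}$ is always a backedge because it is a two-way pair. Hence $x_i^{(k)},x_{i,F}^{(k)},x_{i,T}^{(k)}$ form a triangle. All three vertices are joined in both directions to every vertex of $A'$, and $A'$ is itself a clique on $t-2$ vertices. Thus $A'\cup\{x_i^{(k)},x_{i,F}^{(k)},x_{i,T}^{(k)}\}$ is a clique on $t+1=2c$ vertices, which is a contradiction.

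\emph{At least one.} Suppose neither red edge is a backedge. Since $x_i^{(k)}\to x_{i,F}^{(k)}$ is not a backedge, we have $x_i^{(k)}\ll x_{i,F}^{(k)}$. Since $x_{i,T}^{(k)}\to x_i^{(k)}$ is not a backedge, we have $x_{i,T}^{(k)}\ll x_i^{(k)}$. Together these give $x_{i,T}^{(k)}\ll x_{i,F}^{(k)}$. Now split on the position of $w_i^{(k)}$.
\begin{itemize}
\item If $w_i^{(k)}\ll x_{i,F}^{(k)}$, then the arc $x_{i,F}^{(k)}\to w_i^{(k)}$ is a backedge. Both vertices are joined in both directions to all of $A_F\cong Q_{t-1}$. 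Hence $A_F\cup\{x_{i,F}^{(k)},w_i^{(k)}\}$ is a $K_{t+1}$.
\item Otherwise $x_{i,F}^{(k)}\ll w_i^{(k)}$, so $x_{i,T}^{(k)}\ll w_i^{(k)}$. Then the arc $w_i^{(k)}\to x_{i,T}^{(k)}$ is a backedge. Both endpoints are fully two-way joined to $A_T\cong Q_{t-1}$. Hence $A_T\cup\{w_i^{(k)},x_{i,T}^{(k)}\}$ is a $K_{t+1}$.
\end{itemize}
Either case contradicts $K_{2c}$-freeness.

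There is no real obstacle here. The only step needing care is the second part, where the transitivity $x_{i,T}^{(k)}\ll x_i^{(k)}\ll x_{i,F}^{(k)}$ is what forces one of the two arcs through $w_i^{(k)}$ to become a backedge. It is worth double-checking the arc directions against the figure; the assumption $c>6$ is not needed for this claim.
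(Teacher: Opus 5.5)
Your proof is correct and uses exactly the same three cliques as the paper: $A'$ together with the triangle on $x_i^{(k)},x_{i,F}^{(k)},x_{i,T}^{(k)}$ rules out both red edges being backedges, and $A_F$, $A_T$ force $x_{i,F}^{(k)}\ll w_i^{(k)}\ll x_{i,T}^{(k)}$, which is incompatible with neither red edge being present. The only difference is organizational. The paper derives $x_{i,F}^{(k)}\ll w_i^{(k)}\ll x_{i,T}^{(k)}$ up front and then places $x_i^{(k)}$, whereas you prove the ``at least one'' direction by contradiction with a case split on the position of $w_i^{(k)}$.
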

    \begin{claimproof}
        Let $\ll$ be any linear order on the vertices of $G_{x_i}^{(k)}$.
        Assuming the backedge graph is $K_{2c}$-free, we get $x_{i,F}^{(k)} \ll w_i^{(k)}$, as otherwise the $t-1$ vertices in $A_F$, $x_{i,F}^{(k)}$ and $w_{i}^{(k)}$ span a $2c$-clique in the backedge graph.
        Similarly, using the $t-1$ vertices in $A_T$, $x_{i,T}^{(k)}$, and $w_i^{(k)}$, we get $w_i^{(k)} \ll x_{i,T}^{(k)}$, and as a consequence $x_{i,F}^{(k)} \ll x_{i,T}^{(k)}$.
        If $x_{i,F}^{(k)} \ll x_i^{(k)} \ll x_{i,T}^{(k)}$ then these three vertices and the $t-2$ vertices in $A'$ span a $2c$-clique in the backedge graph.
        Thus, exactly one of $x_i^{(k)} \ll x_{i,F}^{(k)}$ or $x_{i,T}^{(k)} \ll x_i^{(k)}$ holds and the claim follows.
    \end{claimproof}

    The idea behind the gadget is that the choice of the red edge remaining in the backedge graph corresponds to the valuation of variable $x_i$ in clause $C_k$.

    \subparagraph*{Copy Gadget}
    For every two occurrences of a single variable $x_i \in X$ in two different clauses $C_k$ and $C_\ell$ we introduce a subgraph $G_{x_i}^{(k,\ell)}$, see \cref{fig:copy_gadget}, consisting of the following:
    \begin{enumerate}
        \item Additional copy of a binary gadget (the middle layer in \cref{fig:copy_gadget}).
        \item Four disjoint complete directed graphs (gray nodes in \cref{fig:binary_gadget}) $A_1,A_2,A_3,A_4$ with $t-3$ vertices each.
        \item Arcs as in \cref{fig:copy_gadget}.
    \end{enumerate}
\begin{figure}
\begin{center}
\begin{tiny}
\begin{tikzpicture}[scale=1.5]
    \node[circle,draw] at (0,0) (x) {$x_i^{(k)}$};
    \node[circle,draw] at (-3,0) (xF) {$x_{i,F}^{(k)}$};
    \node[circle,draw] at (3,0) (xT) {$x_{i,T}^{(k)}$};
\begin{scope}[on background layer]
    \node (back) [fit=(xF) (xT) (x),fill=blue!25, inner sep=0.5em, rounded corners] {};
    \node (title) at (back.north) [anchor=south,color=blue] {\small $G_{x_i}^{(k)}$};
    \node [fit=(title) (xF) (xT) (x),fill=blue!5, inner sep=1em, rounded corners] {};
    \node (back) [fit=(xF) (xT) (x),fill=blue!25, inner sep=0.5em, rounded corners] {};
    \node (title) at (back.north) [anchor=south,color=blue] {\small $G_{x_i}^{(k)}$};
\end{scope}
\path[draw,->,red, very thick] (x) -- (xF);
\path[draw,->,red, very thick] (xT) -- (x);
\path[draw,<->] (xF) edge [bend left=10] (xT);
    \node[circle,draw] at (0,-3) (y) {$x_{i}^{(k,\ell)}$};
    \node[circle,draw] at (3,-3) (yF) {$x_{i,F}^{(k,\ell)}$};
    \node[circle,draw] at (-3,-3) (yT) {$x_{i,T}^{(k,\ell)}$};
\begin{scope}[on background layer]
    \node (back2) [fit=(yF) (yT) (y),fill=blue!5, inner sep=1em, rounded corners] {};
    \node (back2) [fit=(yF) (yT) (y),fill=blue!25, inner sep=0.5em, rounded corners] {};
\end{scope}
\path[draw,->,red, very thick] (y) -- (yF);
\path[draw,->,red, very thick] (yT) -- (y);
\path[draw,->] (yF) edge [bend left=15] (yT);
\path[draw,->] (yT) edge [bend left=15] (yF);
\node[circle,draw,fill=lightgray] at (-1.5,-1.5) (xA) {$A_1=Q_{t-3}$};
\path[draw,->] (xA) edge [bend left=10] (x);
\path[draw,->] (x) edge [bend left=10] (xA);
\path[draw,->] (xA) edge [bend left=10] (y);
\path[draw,->] (y) edge [bend left=10] (xA);
\path[draw,->] (xA) edge [bend left=10] (xF);
\path[draw,->] (xF) edge [bend left=10] (xA);
\path[draw,->] (xA) edge [bend left=10] (yT);
\path[draw,->] (yT) edge [bend left=10] (xA);
\path[draw,->] (yF) edge [bend left=25] (x);
\path[draw,->] (x) edge [bend left=25] (yF);
\path[draw,->] (xT) edge [bend left=25] (y);
\path[draw,->] (y) edge [bend left=25] (xT);
\path[draw,->] (xF) edge [bend left=25] (y);
\path[draw,->] (y) edge [bend left=25] (xF);
\path[draw,->] (yT) edge [bend left=25] (x);
\path[draw,->] (x) edge [bend left=25] (yT);

\node[circle,draw,fill=lightgray] at (1.5,-1.5) (xB) {$A_2=Q_{t-3}$};
\path[draw,->] (xB) edge [bend left=10] (x);
\path[draw,->] (x) edge [bend left=10] (xB);
\path[draw,->] (xB) edge [bend left=10] (y);
\path[draw,->] (y) edge [bend left=10] (xB);
\path[draw,->] (xB) edge [bend left=10] (xT);
\path[draw,->] (xT) edge [bend left=10] (xB);
\path[draw,->] (xB) edge [bend left=10] (yF);
\path[draw,->] (yF) edge [bend left=10] (xB);

\path[draw,->] (x) edge [bend left=10] (y);
\path[draw,->] (y) edge [bend left=10] (x);
\path[draw,->] (xF) edge [bend left=10] (yT);
\path[draw,->] (yT) edge [bend left=10] (xF);
\path[draw,->] (xT) edge [bend left=10] (yF);
\path[draw,->] (yF) edge [bend left=10] (xT);

    \node[circle,draw] at (0,-6) (x2) {$x_i^{(\ell)}$};
    \node[circle,draw] at (-3,-6) (x2F) {$x_{i,F}^{(\ell)}$};
    \node[circle,draw] at (3,-6) (x2T) {$x_{i,T}^{(\ell)}$};
\begin{scope}[on background layer]
    \node (back3) [fit=(x2F) (x2T) (x2),fill=blue!25, inner sep=0.5em, rounded corners] {};
    \node (title3) at (back3.south) [anchor=north,color=blue] {\small $G_{x_i}^{(\ell)}$};
    \node [fit=(title3) (x2F) (x2T) (x2),fill=blue!5, inner sep=1em, rounded corners] {};
    \node (back3) [fit=(x2F) (x2T) (x2),fill=blue!25, inner sep=0.5em, rounded corners] {};
    \node (title3) at (back3.south) [anchor=north,color=blue] {\small $G_{x_i}^{(\ell)}$};
\end{scope}
\path[draw,->,red, very thick] (x2) -- (x2F);
\path[draw,->,red, very thick] (x2T) -- (x2);
\path[draw,<->] (x2F) edge [bend right=10] (x2T);
\node[circle,draw,fill=lightgray] at (-1.5,-4.5) (x2A) {$A_3=Q_{t-3}$};
\path[draw,->] (x2A) edge [bend left=10] (x2);
\path[draw,->] (x2) edge [bend left=10] (x2A);
\path[draw,->] (x2A) edge [bend left=10] (y);
\path[draw,->] (y) edge [bend left=10] (x2A);
\path[draw,->] (x2A) edge [bend left=10] (x2F);
\path[draw,->] (x2F) edge [bend left=10] (x2A);
\path[draw,->] (x2A) edge [bend left=10] (yT);
\path[draw,->] (yT) edge [bend left=10] (x2A);

\node[circle,draw,fill=lightgray] at (1.5,-4.5) (x2B) {$A_4=Q_{t-3}$};
\path[draw,->] (x2B) edge [bend left=10] (x2);
\path[draw,->] (x2) edge [bend left=10] (x2B);
\path[draw,->] (x2B) edge [bend left=10] (y);
\path[draw,->] (y) edge [bend left=10] (x2B);
\path[draw,->] (x2B) edge [bend left=10] (x2T);
\path[draw,->] (x2T) edge [bend left=10] (x2B);
\path[draw,->] (x2B) edge [bend left=10] (yF);
\path[draw,->] (yF) edge [bend left=10] (x2B);

\path[draw,->] (x2) edge [bend left=10] (y);
\path[draw,->] (y) edge [bend left=10] (x2);
\path[draw,->] (x2F) edge [bend left=10] (yT);
\path[draw,->] (yT) edge [bend left=10] (x2F);
\path[draw,->] (x2T) edge [bend left=10] (yF);
\path[draw,->] (yF) edge [bend left=10] (x2T);

\path[draw,->] (yF) edge [bend left=25] (x2);
\path[draw,->] (x2) edge [bend left=25] (yF);
\path[draw,->] (x2T) edge [bend left=25] (y);
\path[draw,->] (y) edge [bend left=25] (x2T);
\path[draw,->] (x2F) edge [bend left=25] (y);
\path[draw,->] (y) edge [bend left=25] (x2F);
\path[draw,->] (yT) edge [bend left=25] (x2);
\path[draw,->] (x2) edge [bend left=25] (yT);
\end{tikzpicture}
\end{tiny}
\end{center}
\caption{\label{fig:copy_gadget} Copy gadget for occurrences of variable $x_i$ in clauses $C_k$ and $C_\ell$.
    Blue boxes are copies of the binary gadget.
    }
\end{figure}
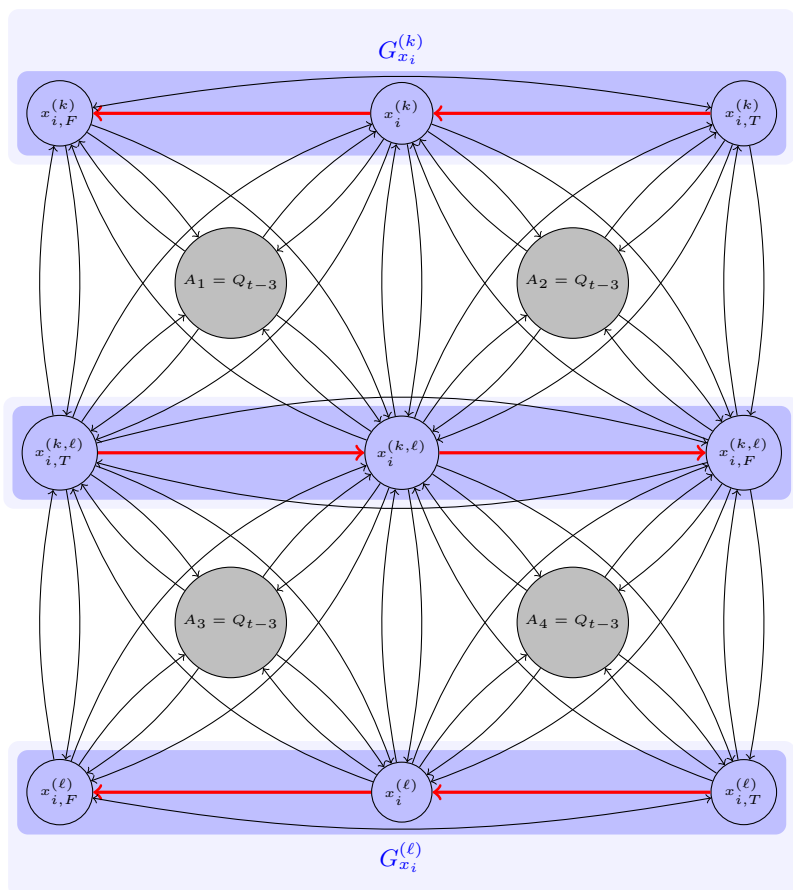
    All the new vertices vertices introduced in the gadget are internal for the gadget and do not get any more adjacencies in the graph.
    Note that the gadget does not introduce any arcs between the vertices of $G_{x_i}^{(k)}$ and the vertices of $G_{x_i}^{(\ell)}$. 
    The following claim captures the crucial property of the gadget.
    \begin{claim}\label{clm:copy_gadget}
      For any linear order $\ll$ on the vertices of $G_{x_i}^{(k)}$, $G_{x_i}^{(\ell)}$, and $G_{x_i}^{(k,\ell)}$ such that the backedge graph $G_{x_i}^{(k)\ll}$ is $K_{2c}$-free we have
        that $\backedge(G_{x_i}^{(k,\ell)},\ll)$ contains the edge $\{x_i^{(k)},x_{i,F}^{(k)}\}$ if and only if it contains the edge $\{x_i^{(\ell)},x_{i,F}^{(\ell)}\}$.
\end{claim}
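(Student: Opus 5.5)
The plan is to combine \cref{clm:binary_gadget} with a direct analysis of four specific vertex sets, each of size $(t-3)+4=t+1=2c$. I read the hypothesis as saying that the backedge graph of the whole union of $G_{x_i}^{(k)}$, $G_{x_i}^{(\ell)}$ and $G_{x_i}^{(k,\ell)}$ is $K_{2c}$-free. The displayed hypothesis only mentions $G_{x_i}^{(k)\ll}$, which I take to be a slip. Under this reading, \cref{clm:binary_gadget} applies to each of the three binary gadgets: $G_{x_i}^{(k)}$, $G_{x_i}^{(\ell)}$ and the middle one on $x_i^{(k,\ell)},x_{i,F}^{(k,\ell)},x_{i,T}^{(k,\ell)}$. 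So each of them keeps exactly one of its two red edges in the backedge graph. I will call that edge its \emph{value} ($F$ or $T$). The goal is to show that the value of $G_{x_i}^{(k)}$ equals the value of the middle gadget, which in turn equals the value of $G_{x_i}^{(\ell)}$.

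The key observation is that a pair joined by arcs in both directions is a backedge under every order. Consider $S_1 = A_1\cup\set{x_i^{(k)},x_i^{(k,\ell)},x_{i,F}^{(k)},x_{i,T}^{(k,\ell)}}$. From \cref{fig:copy_gadget}, $A_1$ is a complete directed graph that is joined in both directions to all four named vertices. Among the four named vertices, the pairs $\set{x_i^{(k)},x_i^{(k,\ell)}}$, $\set{x_{i,F}^{(k)},x_{i,T}^{(k,\ell)}}$, $\set{x_i^{(k)},x_{i,T}^{(k,\ell)}}$ and $\set{x_{i,F}^{(k)},x_i^{(k,\ell)}}$ are joined in both directions. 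Only two pairs are joined by a single arc, and these are the red arcs: $\set{x_i^{(k)},x_{i,F}^{(k)}}$ and $\set{x_i^{(k,\ell)},x_{i,T}^{(k,\ell)}}$. Hence $S_1$ spans $K_{2c}$ in $G^\ll$ exactly when both of these red edges are backedges. Since the graph is $K_{2c}$-free, it is impossible that $G_{x_i}^{(k)}$ has value $F$ and the middle gadget has value $T$.

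Symmetrically, $S_2=A_2\cup\set{x_i^{(k)},x_i^{(k,\ell)},x_{i,T}^{(k)},x_{i,F}^{(k,\ell)}}$ forbids the combination where $G_{x_i}^{(k)}$ has value $T$ and the middle gadget has value $F$. Combining this with the ``exactly one'' part of \cref{clm:binary_gadget}, the two values must coincide. Running the identical argument with $A_3$ and $A_4$, applied to the middle gadget and $G_{x_i}^{(\ell)}$, shows that these two values coincide as well. By transitivity the $F$-edge of $G_{x_i}^{(k)}$ is a backedge if and only if the $F$-edge of $G_{x_i}^{(\ell)}$ is.

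The only real work is reading the arc list of \cref{fig:copy_gadget} carefully to confirm, for each of the four sets, that every non-red pair is bidirectional. One detail to watch is that in the middle layer the labels $T$ and $F$ are drawn on swapped sides. Beyond that I expect no obstacle; the argument is a finite check combined with \cref{clm:binary_gadget}.
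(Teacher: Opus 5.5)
Your proof is correct and follows the paper's argument: the paper also uses the $(t-3)+4$-vertex sets built on $A_1$ and $A_4$ (and, ``similarly'', $A_2$ and $A_3$) to rule out mismatched red edges between adjacent layers, and then invokes \cref{clm:binary_gadget} to propagate the value. Your reading of the hypothesis as $K_{2c}$-freeness of the whole union is what the paper's proof actually relies on, and your route of pairwise equality of values plus transitivity just makes the ``if and only if'' slightly more explicit than the paper's one-directional chains.
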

    \begin{claimproof}
        Assume that $\{x_i^{(k)},x_{i,F}^{(k)}\}$ is an edge in $\backedge(G_{x_i}^{(k,\ell)},\ll)$.
        We get that $\{x_i^{(k,\ell)},x_{i,T}^{(k,\ell)}\}$ is not an edge in $\backedge(G_{x_i}^{(k,\ell)},\ll)$,
        as otherwise the $t-3$ vertices in $A_1$ and the four surrounding vertices induce a $2c$-clique in the backedge graph.
        As the middle layer is a copy of the binary gadget, by \cref{clm:binary_gadget}, we get that $\{x_i^{(k,\ell)},x_{i,F}^{(k,\ell)}\}$ is an edge in $\backedge(G_{x_i}^{(k,\ell)},\ll)$.
        We get that $\{x_i^{(\ell)},x_{i,T}^{(\ell)}\}$ is not an edge in $\backedge(G_{x_i}^{(k,\ell)},\ll)$,
        as otherwise the $t-3$ vertices in $A_4$ and the four surrounding vertices induce a $2c$-clique in the backedge graph.
        By \cref{clm:binary_gadget} applied to the binary gadget $G_{x_i}^{(\ell)}$, we get that $\{x_i^{(\ell)},x_{i,F}^{(\ell)}\}$ is an edge in $\backedge(G_{x_i}^{(k,\ell)},\ll)$.

        Similarly, if $\{x_i^{(k)},x_{i,T}^{(k)}\}$ is an edge in $\backedge(G_{x_i}^{(k,\ell)},\ll)$, we get that $\{x_i^{(\ell)},x_{i,T}^{(\ell)}\}$ is an edge in $\backedge(G_{x_i}^{(k,\ell)},\ll)$,
        and the claim follows.
   \end{claimproof}
    
    The idea behind the gadget is to ensure that red edges corresponding to all occurrences of a single variable in different clauses give a consistent valuation of variable $x_i$.

    \subparagraph*{Clause gadget}
    For every occurrence of a variable $y_j \in Y$ in $C_k$, we introduce two vertices $y_{j,A}^{(k)}$, $y_{j,B}^{(k)}$ connected by two parallel arcs (in opposite directions).
    For every clause \(C_k\) we use 6 vertices in \(G\) in three groups: a pair of vertices for every occurrence \(z\) in \(C_k\).
    \begin{itemize}
      \item If \(z = y_j\) or \(z = \neg y_j\) for some \(y_j \in Y\), then the group consists of vertices $y_{j,A}^{(k)}$ and $y_{j,B}^{(k)}$.
        \item If \(z = x_i\) for some \(x_i \in X\), then the group consists of vertices $x_{i}^{(k)}$ and $x_{i,T}^{(k)}$ introduced in $G_{x_i}^{(k)}$.
        \item If \(z = \neg x_i\) for some \(x_i \in X\), then the group consists of vertices $x_{i}^{(k)}$ and $x_{i,F}^{(k)}$ introduced in $G_{x_i}^{(k)}$.
    \end{itemize}
\begin{figure}
\begin{center}
\begin{tiny}
\begin{tikzpicture}
    \node[circle,draw] at (0,0) (x11) {$x_1^{(1)}$};
    \node[circle,draw] at (0,-1.5) (x11T) {$x_{1,T}^{(1)}$};
    \path[draw,->,thick] (x11T) -- (x11);
\begin{scope}[on background layer]
    \node (back11) [fit=(x11) (x11T),fill=black!10, inner sep=0.5em, rounded corners] {};
    \node at (back11.west) [anchor=east,color=black] {$x_1$};
\end{scope}
    \node[circle,draw] at (0,-3) (x21) {$x_2^{(1)}$};
    \node[circle,draw] at (0,-4.5) (x21F) {$x_{2,F}^{(1)}$};
    \path[draw,->,thick] (x21) -- (x21F);
\begin{scope}[on background layer]
    \node (back21) [fit=(x21) (x21F),fill=black!10, inner sep=0.5em, rounded corners] {};
    \node at (back21.west) [anchor=east,color=black] {$\neg x_2$};
\end{scope}
    \node[circle,draw] at (0,-6) (y31) {$y_{3,A}^{(1)}$};
    \node[circle,draw] at (0,-7.5) (y31C) {$y_{3,B}^{(1)}$};
    \path[draw,->] (y31) edge [bend left=10] (y31C);
    \path[draw,->] (y31C) edge [bend left=10] (y31);
\begin{scope}[on background layer]
    \node (back31) [fit=(y31) (y31C),fill=black!10, inner sep=0.5em, rounded corners] {};
    \node at (back31.west) [anchor=east,color=black] {$y_3$};
\end{scope}

    \node[circle,draw] at (6,0) (x12) {$x_1^{(2)}$};
    \node[circle,draw] at (6,-1.5) (x12F) {$x_{1,F}^{(2)}$};
    \path[draw,->,thick] (x12) -- (x12F);
\begin{scope}[on background layer]
    \node (back12) [fit=(x12) (x12F),fill=black!10, inner sep=0.5em, rounded corners] {};
    \node at (back12.east) [anchor=west,color=black] {$\neg x_1$};
\end{scope}
    \node[circle,draw] at (6,-3) (y32) {$y_{3,A}^{(2)}$};
    \node[circle,draw] at (6,-4.5) (y32C) {$y_{3,B}^{(2)}$};
    \path[draw,->] (y32) edge [bend left=10] (y32C);
    \path[draw,->] (y32C) edge [bend left=10] (y32);
\begin{scope}[on background layer]
    \node (back32) [fit=(y32) (y32C),fill=black!10, inner sep=0.5em, rounded corners] {};
    \node at (back32.east) [anchor=west,color=black] {$\neg y_3$};
\end{scope}
    \node[circle,draw] at (6,-6) (y42) {$y_{4,A}^{(2)}$};
    \node[circle,draw] at (6,-7.5) (y42C) {$y_{4,B}^{(2)}$};
    \path[draw,->] (y42) edge [bend left=10] (y42C);
    \path[draw,->] (y42C) edge [bend left=10] (y42);
\begin{scope}[on background layer]
    \node (back42) [fit=(y42) (y42C),fill=black!10, inner sep=0.5em, rounded corners] {};
    \node at (back42.east) [anchor=west,color=black] {$y_4$};
\end{scope}

    \coordinate (a1t) at (back11.north east);
    \coordinate (a1b) at (back11.south east);
    \coordinate (a2t) at (back21.north east);
    \coordinate (a2b) at (back21.south east);
    \coordinate (a3t) at (back31.north east);
    \coordinate (a3b) at (back31.south east);

    \coordinate (b1t) at (back12.north west);
    \coordinate (b1b) at (back12.south west);
    \coordinate (b2t) at (back32.north west);
    \coordinate (b2b) at (back32.south west);
    \coordinate (b3t) at (back42.north west);
    \coordinate (b3b) at (back42.south west);

    %\path[draw,<->,thick,double] (barycentric cs:a1t=3,a1b=2) -- (barycentric cs:b1t=3,b1b=2);
    \path[draw,<->,thick,double] (barycentric cs:a1t=2,a1b=2) -- (barycentric cs:b2t=3,b2b=2);
    \path[draw,<->,thick,double] (barycentric cs:a1t=2,a1b=3) -- (barycentric cs:b3t=3,b3b=2);
    \path[draw,<->,thick,double] (barycentric cs:a2t=3,a2b=2) -- (barycentric cs:b1t=2,b1b=2);
    \path[draw,<->,thick,double] (barycentric cs:a2t=2,a2b=2) -- (barycentric cs:b2t=2,b2b=2);
    \path[draw,<->,thick,double] (barycentric cs:a2t=2,a2b=3) -- (barycentric cs:b3t=2,b3b=2);
    \path[draw,<->,thick,double] (barycentric cs:a3t=3,a3b=2) -- (barycentric cs:b1t=2,b1b=3);
    %\path[draw,<->,thick,double] (barycentric cs:a3t=2,a3b=2) -- (barycentric cs:b2t=2,b2b=3);
    \path[draw,<->,thick,double] (barycentric cs:a3t=2,a3b=3) -- (barycentric cs:b3t=2,b3b=3);
\end{tikzpicture}
\end{tiny}
\end{center}
    \caption{\label{fig:clause_gadget} Clause gadgets for clauses $C_1 = x_1 \vee \neg x_2 \vee y_3$ (left) and $C_2 = \neg x_1 \vee \neg y_3 \vee y_4$ (right).
    Double edges represent all possible edges going from one group to another in any direction.
    There are no edges between vertices corresponding to $x_1$ in $C_1$ and in $C_2$, as these are negated occurrences.
    Similarly for occurrences of $y_3$.
    }
\end{figure}
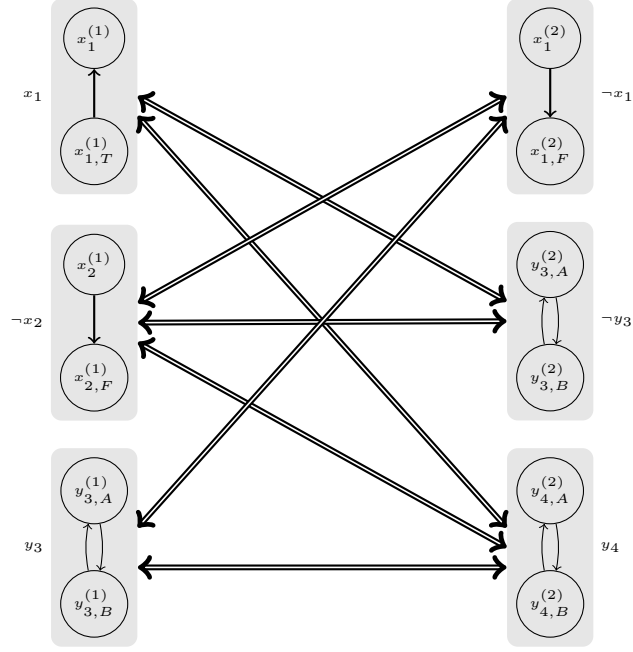

    \subparagraph*{Reduction}
    We construct graph $G$ by introducing a binary gadget for every occurrence of a variable in $X$, a copy gadget for every pair of occurrences of a single variable in $X$, and a clause gadget for every clauses in $\mathcal{C}$.
    Finally, for any two groups in two different clause gadgets if the groups correspond to different variables or correspond to the occurrences of the same variable with the same sign (both occurrences are positive or both are negative) we add all possible eight arcs between the two groups, see \cref{fig:clause_gadget} for an example.
    This concludes the description of the reduction.
    Note that 
    the size of the constructed graph is polynomial in the size of the input formula
    and the reduction can be performed in polynomial time. 

    \subparagraph*{Completeness}
    Assuming that $\abrac{X,Y,\mathcal{C}}$ is a $\mathsf{Yes}$ instance, 
    let $\nu=(\nu_1,\nu_2,\ldots,\nu_a)$ be a valuation of $x_1,x_2,\ldots,x_a$ such that the formula $$\exists_{y_1,y_2,\ldots,y_b}:\quad (C_1 \wedge C_2 \wedge \ldots \wedge C_c)[x_1=\nu_1,x_2=\nu_2,\ldots,x_a=\nu_a]$$ is false.
    We construct a linear order \(\ll\) such that the backedge graph \(G^{\ll}\) does not contain a clique of size \(2c\).

    First, we use valuation $\nu$ to define the order in each binary gadget.
    Let $G_{x_i}^{(k)}$ be some binary gadget.
    If $\nu_i$ is true, we set $x_i^{(k)} \ll x_{i,F}^{(k)} \ll w_i^{(k)} \ll x_{i,T}^{(k)}$.
    Otherwise, we set $x_{i,F}^{(k)} \ll w_i^{(k)} \ll x_{i,T}^{(k)} \ll x_i^{(k)}$.
    We extend this order to the remaining internal vertices of the gadget arbitrarily, as the resulting backedge graph is the same for any resulting order.
    We show that the backedge graph induced by the vertices of the binary gadget is $K_{2c}$-free.
    As we have assumed $c>6$, any clique with $2c$ vertices in the backedge graph must use at least one vertex in $A_F$, $A_T$ or $A'$.
    As there are no arcs between these three sets, we can assume that the clique consists of all vertices in exactly one of the sets $A_F$, $A_t$ or $A'$ and possibly some of the remaing four vertices.
    Now, a simple case analysis similar to the proof of \cref{clm:binary_gadget} shows that the backedge graph induced by the vertices of the binary gadget is $K_{2c}$-free.
    If $\nu_i$ is true, the edge $\{x_i^{(k)}, x_{i,T}^{(k)}\}$ is in the resulting backedge graph.
    Otherwise, the edge $\{x_i^{(k)}, x_{i,F}^{(k)}\}$ is in the backedge graph.
    This way, the choice of the red edge in the backedge graph corresponds to the valuation $\nu_i$ of variable $x_i$.
    Observe that as the internal vertices of the gadget are not connected to any other vertices in the graph, no matter how we set the order $\ll$ on the remaining vertices, none of the internal vertices can be an element of $K_{2c}$ in the backedge graph of the whole graph.

    Similarly, we use valuation $\nu$ to define the order in each copy gadget.
    Let $G_{x_i}^{(k,\ell)}$ be some copy gadget.
    If $\nu_i$ is true, we set $x_i^{(k,\ell)} \ll x_{i,F}^{(k,\ell)} \ll w_i^{(k,\ell)} \ll x_{i,T}^{(k,\ell)}$.
    Otherwise, we set $x_{i,F}^{(k,\ell)} \ll w_i^{(k,\ell)} \ll x_{i,T}^{(k,\ell)} \ll x_i^{(k,\ell)}$.
    We extend this order to the remaining internal vertices of the gadget arbitrarily, as the resulting backedge graph is the same for any resulting order.
    Observe that when occurrence of $x_i$ is of the same sign in clause $C_k$ and clause $C_\ell$, then the reduction adds eight additional arcs between vertices of $G_{x_i}^{(k)}$ and vertices of $G_{x_i}^{(\ell)}$ that are not present in \cref{fig:copy_gadget}.
    Even with these additional edges, we show that the backedge graph induced by the vertices of the binary gadget is $K_{2c}$-free.
    As we have assumed $c>6$, any clique with $2c$ vertices in the backedge graph must use at least one vertex in $A_1$, $A_2$, $A_3$ or $A_4$.
    As there are no arcs between these four sets, we can assume that the clique consists of all vertices in exactly one of the sets $A_1$, $A_2$, $A_3$ or $A_4$ and possibly some of the remaing vertices.
    Now, a simple case analysis similar to the proof of \cref{clm:copy_gadget} shows that the backedge graph induced by the vertices of the copy gadget is $K_{2c}$-free.
    As in a binary gadget, none of the internal vertices can be an element of $K_{2c}$ in the backedge graph of the whole graph.

    We combine the orders defined for all binary and all copy gadgets arbitrarily to get a linear order on all vertices introduced in these gadgets.
    Finally, we insert the vertices introduced in clause gadgets for occurrences of variables in $Y$ into order $\ll$ arbitrarily.
    This concludes the construction of the linear order $\ll$.

    We now have to show that the backedge graph $G^{\ll}$ does not contain a clique of size $2c$.
    Assuming to the contrary let $X$ be a set of vertices that induce a clique of size $2c$ in the backedge graph.
    As observed before, none of the internal vertices is in $X$.
    There are at most two vertices in $X$ in each binary gadget, as one of the three non-internal vertices is not connected to any other vertex outside the gadget.
    Similarly, there are at most two vertices in $X$ in each clause gadget, as the vertices in different groups in a single clause gadget are not connected.
    As $\norm{X} = 2c$, we get that there are exactly two vertices in $X$ in each clause gadget and the two vertices in a single clause gadget are in a single group that corresponds to some occurrence $z$ in the clause $C_k$.
    If $z$ is an occurrence of some $x_i \in X$ then we must have that the corresponding red edge is left in the backedge graph and we get that $\nu_i$ satisfies clause $C_k$.
    Now, we construct a valuation $\mu=(\mu_1,\mu_2,\ldots,\mu_b)$ of variables $y_1,y_2,\ldots,y_b$ as follows.
    If $z$ is an occurrence of some $y_j \in Y$ then we set $\mu_j$ so that it satisfies clause $C_k$.
    As there are no edges between groups corresponding to opposite occurrences of a single variable $y_j$ in different clause gadgets, this valuation of variables in $Y$ is consistent.
    If valuation $\mu$ for some variable in $Y$ is not decided yet, we set it arbitrarily.
    We get that the resulting valuation $\mu$ together with $\nu$ satisfies every clause in $C_k$.
    A contradiction.

    \subparagraph*{Soundness}
    Assuming that $\abrac{G,t}$ is a $\mathsf{Yes}$ instance,
    let $\ll$ be a linear order on vertices of $G$ such that in \(G^{\ll}\) there is no clique of size \(2c\).
    \cref{clm:binary_gadget} and \cref{clm:copy_gadget} together imply that we can construct a valuation $\nu$ of the variables in $X$ in the following way.
    We set $\nu_i$ to true if the edge $\{x_i^{(k)}, x_{i,T}^{(k)}\}$ is in the backedge graph for some occurrence of $x_i$ in some clause $C_k$.
    By \cref{clm:copy_gadget}, this means that edge $\{x_i^{(\ell)}, x_{i,T}^{(\ell)}\}$ is in the backedge graph for every occurrence of $x_i$ in any clause $C_\ell$.
    Otherwise, we set $\nu_i$ to false.
    
    We now have to show that the formula $$\exists_{y_1,y_2,\ldots,y_b}:\quad (C_1 \wedge C_2 \wedge \ldots \wedge C_c)[x_1=\nu_1,x_2=\nu_2,\ldots,x_a=\nu_a]$$ is false.
    Assume to the contrary that there is a valuation $\mu=(\mu_1,\mu_2,\ldots,\mu_b)$ of $y_1,y_2,\ldots,y_b$ that together with $\nu$ satisfies every clause in the formula.
    We will use this valuation to construct a clique of size \(2c\) in the backedge graph \(G^{\ll}\).
    For every clause \(C_k\) we pick two vertices that form a group that corresponds to some variable that satisfies $C_k$.
    The construction of the reduction guarantees that there are all four possible edges between any two such groups in different clause gadgets.
    If the selected group corresponds to a variable in $Y$, then the vertices in the group are connected by an edge in the backedge graph.
    If the selected group corresponds to a variable in $X$, then by the definition of valuation $\nu$ the corresponding red edge is in the backedge graph.
    Thus, the selected vertices induce a clique of size \(2c\) in the backedge graph.
    A contradiction.
\end{proof}

\section{Upper bound}\label{sec:upper_bound}
In this section we provide a simple upper bound on the directed clique number of a tournament.

\begin{theorem}\label{thm:upper_bound}
    For every tournament \(T\) with \(\diomega(T) \ge t \ge 1\), we have \(\norm{V(T)} \geq {{t+1}\choose{2}}\).
\end{theorem}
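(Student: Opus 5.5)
The plan is to prove the contrapositive in a strengthened, order-theoretic form. Fix a \emph{median order} $\ll$ of $T$, i.e.\ a linear order minimizing $\norm{\backedge(T,\ll)}$. I will show that if $G^\ll=T^\ll$ contains a clique $K_t$, then the interval of $\ll$ spanned by that clique already contains at least ${t+1\choose 2}$ vertices. Since $\diomega(T)\ge t$ forces every order, in particular the median one, to contain $K_t$ in its backedge graph, this gives $\norm{V(T)}\ge{t+1\choose 2}$.

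First I record the local properties of a median order. A clique $v_1\ll v_2\ll\dots\ll v_t$ in $T^\ll$ means that $(v_j,v_i)\in A(T)$ for all $i<j$, i.e.\ the clique induces a transitive tournament ordered backwards. Let $I=[u,w]$ be any interval of $\ll$.
\begin{itemize}
\item Moving $u$ to just after $w$ changes the number of backedges by $\text{in}_I(u)-\text{out}_I(u)$, where the degrees are counted inside $I$. Minimality therefore gives $\text{out}_I(u)\ge\text{in}_I(u)$.
\item Moving $w$ to just before $u$ similarly gives $\text{in}_I(w)\ge\text{out}_I(w)$.
\end{itemize}
Moreover, the restriction of a median order to any interval is a median order of the subtournament induced on that interval, because otherwise we could reorder the interval and decrease the global count. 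This last fact is what makes induction on $t$ possible.

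The induction runs as follows, with statement: in a median order, any backedge clique of size $k$ spans an interval with at least ${k+1\choose 2}$ vertices. The base case $k=1$ is trivial. For the step, take a $t$-clique $v_1\ll\dots\ll v_t$ and let $I=[v_1,v_t]$.
\begin{itemize}
\item Since $v_2,\dots,v_t$ is a $(t-1)$-clique and $[v_2,v_t]$ is median for its subtournament, the interval $[v_2,v_t]$ has at least ${t\choose 2}$ vertices.
\item It remains to find $t$ further vertices, namely $v_1$ together with $t-1$ vertices between $v_1$ and $v_2$. The endpoint inequality only gives that $v_1$ has at least $t-1$ out-neighbours somewhere in $I$, which may lie inside $[v_2,v_t]$.
\end{itemize}

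The main obstacle is exactly this disjointness issue: the counting has to be arranged so that each clique vertex pays for its own block of vertices. My first attempt would be to choose the $t$-clique (among all $t$-cliques in $T^\ll$) minimizing the length of its span, and to peel from the other end as well. This means using $\text{in}_I(v_t)\ge\text{out}_I(v_t)$ together with the inequality for $v_1$, and applying the interval inequalities to every sub-interval $[v_i,v_j]$. The aim is to show that the gap between $v_i$ and $v_{i+1}$, or more generally a suitable sum of gaps, grows with $i$.

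Concretely, I would try to prove by a double-counting argument over all pairs $i<j$ that the vertices strictly between clique members number at least ${t\choose 2}$. The idea is to charge each clique arc $(v_j,v_i)$ to a distinct non-clique vertex $x$ in $[v_i,v_j]$ with $(v_i,x)\in A(T)$ or $(x,v_j)\in A(T)$, using the fact that a violation would let us move $v_i$ or $v_j$ and decrease the number of backedges.

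If this charging fails, the fallback is a direct construction rather than median orders. The contrapositive claim would be that a tournament with fewer than ${t+1\choose 2}$ vertices has an order with backedge cliques of size at most $t-1$. One would build such an order greedily by splitting off a suitable set of $t$ vertices and recursing on the remaining fewer than ${t\choose 2}$ vertices.
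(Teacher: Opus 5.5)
There is a genuine gap. Your main route is built on a false strengthened claim, and your fallback leaves out the one idea that makes it work.

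\textbf{The median-order claim fails at $k=4$.} Take a directed triangle $p\to q\to r\to p$ ordered $p,q,r$; its only backedge is $\{p,r\}$. Let $T$ consist of three modules: $M_1=\{p_1,q_1,r_1\}$ and $M_3=\{p_3,q_3,r_3\}$, two copies of this triangle, and a transitive triple $M_2$. Put all arcs $M_1\to M_2\to M_3\to M_1$, and order $T$ as $M_1,M_2,M_3$, with $M_1,M_3$ ordered as above and $M_2$ ordered forwards. This order has $11$ backedges: one inside each of $M_1$ and $M_3$, plus the $9$ arcs from $M_3$ to $M_1$. No order does better. Each of the $27$ directed triangles $x\to y\to z\to x$ with $x\in M_1$, $y\in M_2$, $z\in M_3$ needs a backward arc, and each inter-module arc lies in only $3$ of them, so at least $9$ inter-module backedges are forced. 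Each of $M_1$ and $M_3$ forces one more internally. So this is a median order, yet $\{p_1,r_1,p_3,r_3\}$ is a clique in $T^\ll$ whose span is all $9<\binom{5}{2}$ vertices. That span contains only $5<\binom{4}{2}$ non-clique vertices, so no charging scheme can succeed. More fundamentally, median orders are the wrong witnesses: since $9<\binom{5}{2}$, the theorem gives $\diomega(T)\le 3$, but this median order has a backedge $K_4$.

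\textbf{The fallback is the paper's proof, but ``suitable'' hides all of its content.} The set to split off is the vertex set $U$ of a transitive subtournament $TT_t$. It exists because a $K_t$ in any backedge graph induces one; in your contrapositive form, if $T$ has no $TT_t$ then every order is already $K_t$-free. Order $U'=V(T)\setminus U$ optimally for $T[U']$, then append $U$ in its transitive (forward) order, so $U$ spans no backedges. Every clique of $T^\ll$ then meets $U$ in at most one vertex, so $\diomega(T)\ge t$ forces $\diomega(T[U'])\ge t-1$. Induction gives $\norm{U'}\ge\binom{t}{2}$, hence $\norm{V(T)}\ge\binom{t}{2}+t=\binom{t+1}{2}$. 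Without naming this choice of $U$, and the observation that it contributes at most one vertex to any backedge clique, the proposal is not a proof.
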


\begin{proof}
    The proof goes by the induction on $t$.
    For the base case $t=1$, we obviously have that the tournament needs to have at least $1={{1+1} \choose 2}$ vertex.
    Now, for the induction step, let $t\ge2$, and assume that the statement of the theorem holds for $t-1$.

    Let $T$ be a tournament with $\diomega(T) \ge t$.
    As $\diomega(T) \ge t$, then for every linear order of $V(T)$, the backedge graph contains a clique $K_t$.
    Thus, there is at least one subset $U \subseteq V(G)$ of vertices that induces a copy of a transitive tournament $TT_t$.
    Let $U' = V(G)\setminus U$ be the complement of $U$, $T' = T[U']$ be the tournament induced by $U'$, $t' = \diomega(T')$ be the directed clique number of $T'$, and $\ll'$ be some linear order of $U'$ with $\omega(T'^{\ll'})=t'$.

    As $U$ induces a transitive subtournament in $T$, we can define order $\ll''$ on $U$ such that for any $x,y \in U$ we have $x \ll'' y$ if and only if $(x,y) \in A(T)$.
    Consider a linear order $\ll$ on $V(G)$ that first puts all vertices in $U'$ in order $\ll'$ and then puts all the vertices in $U$ in order $\ll''$.
    As $\diomega(T) \ge t$, there is at least one copy of a clique $K_t$ in the backedge graph $T^\ll$.
    We have that there are no edges between vertices in $U$ in the backedge graph $T^\ll$.
    Thus, any clique in the backedge graph $T^\ll$ can include at most one vertex in $U$, and we have $t' \ge t-1$.
    By the induction hypothesis, we get $\norm{U'} \ge {t\choose2}$, and
    \[
        \norm{V(G)} = \norm{U'} + \norm{U} \ge {t\choose2} + t = {{t+1}\choose2}\text{.}
    \]
\end{proof}

\begin{corollary}\label{cor:upper_bound}
    For every tournament $T$, we have $\diomega(T) \le \sqrt{2\norm{V(T)}}$.
\end{corollary}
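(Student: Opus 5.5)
The plan is to derive the corollary directly from \cref{thm:upper_bound} by contraposition and a short algebraic manipulation. Let $T$ be a tournament with $n = \norm{V(T)}$ vertices, and put $t = \diomega(T)$.

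First we dispose of the case $t = 0$. Then the bound $t \le \sqrt{2n}$ holds trivially, since $\sqrt{2n} \ge 0$. This case occurs only for the empty tournament, because any tournament with at least one vertex has $\diomega(T)\ge 1$.

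Now assume $t \ge 1$. We apply \cref{thm:upper_bound} with this value of $t$. Its hypothesis $\diomega(T) \ge t \ge 1$ is satisfied, so we obtain
\[
    n \ge {{t+1}\choose 2} = \frac{t(t+1)}{2} \ge \frac{t^2}{2}\text{.}
\]
Hence $t^2 \le 2n$. Since $t$ is nonnegative, taking square roots gives $\diomega(T) = t \le \sqrt{2n}$, which is the claim.

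There is no real obstacle here. The only points needing care are making sure that \cref{thm:upper_bound} is applied with $t$ equal to the directed clique number itself, which is a legitimate choice since $\diomega(T) \ge \diomega(T)$, and treating the degenerate case $t=0$ separately. The slack we drop by replacing $t(t+1)$ with $t^2$ does not affect the stated bound.
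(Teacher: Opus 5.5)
Your proposal is correct and is essentially the paper's argument: both apply \cref{thm:upper_bound} with $t=\diomega(T)$ and compare ${{t+1}\choose 2}$ with $\norm{V(T)}$. The paper phrases this as a contradiction and leaves the case $\diomega(T)=0$ implicit, whereas you argue directly and treat that case explicitly.
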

\begin{proof}
    Assuming to the contrary \(\diomega(T) > \sqrt{2\norm{V(T)}}\), we have \({{\diomega(T)+1}\choose{2}} > \norm{V(T)}\), and a contradiction with \cref{thm:upper_bound}.
\end{proof}
    
\bibliography{directed_clique}

\end{document}